\newtheorem{theorem}{Theorem}
\newtheorem{proposition}[theorem]{Proposition}
\newtheorem{definition}[theorem]{Definition}
\theoremstyle{definition}
\newtheorem{observation}[theorem]{Observation}
\newtheorem{example}[theorem]{Example}
\newtheorem{remark}[theorem]{Remark}
\renewcommand{\epsilon}{\varepsilon}
\newcommand{\N}{\mathbb{N}}
\newcommand{\C}{\mathbb{C}}
\newcommand{\be}{\begin{eqnarray}}
\newcommand{\ee}{\end{eqnarray}}
\newcommand{\ben}{\begin{enumerate}}
\newcommand{\een}{\end{enumerate}}
\newcommand{\ba}{\begin{array}}
\newcommand{\ea}{\end{array}}
\newcommand{\Her}{\mathrm{Her}}
\newcommand\setn{\{1,\ldots, n\}}
\newcommand{\hern}{\mathrm{Her}_n(\mathbb{C}) }
\newcommand{\hers}{\mathrm{Her}_s (\mathbb{C})}
\newcommand{\mat}{\mathrm{Mat}}
\renewcommand{\phi}{\varphi}
\renewcommand{\epsilon}{\varepsilon}
\renewcommand{\theta}{\vartheta}
\newcommand\mconv{{\rm mconv}}
\let\originalleft\left
\let\originalright\right
\renewcommand{\left}{\mathopen{}\mathclose\bgroup\originalleft}
\renewcommand{\right}{\aftergroup\egroup\originalright}
\begin{document}
\title{Magic squares: Latin, Semiclassical and Quantum} 

\begin{abstract}
Quantum magic squares were recently introduced as a `magical' combination of quantum measurements. In contrast to quantum measurements, they cannot be purified (i.e.\ dilated to a quantum permutation matrix)---only the so-called semiclassical ones can. Purifying establishes a relation to an ideal world of fundamental theoretical and practical importance; the opposite of purifying is described by the matrix convex hull. In this work, we prove that semiclassical magic squares can be purified to quantum Latin squares, which are `magical'  combinations of orthonormal bases. Conversely, we prove that the matrix convex hull of quantum Latin squares is larger than the semiclassical ones. This tension is resolved by our third result: We prove that the quantum Latin squares that are semiclassical are precisely those constructed from a classical Latin square. Our work sheds light on the internal structure of quantum magic squares, on how this is affected by the matrix convex hull, and, more generally, on the nature of the `magical' composition rule, both at the semiclassical and quantum level. 
\end{abstract}

\author{Gemma De las Cuevas}
\address{Institute for Theoretical Physics, Technikerstr.\ 21a,  A-6020 Innsbruck, Austria}

\author{Tim Netzer}
\address{Department of Mathematics, Technikerstr.\ 13,  A-6020 Innsbruck, Austria}

\author{Inga Valentiner-Branth}
\address{Department of Mathematics, Technikerstr.\ 13,  A-6020 Innsbruck, Austria}

\date{\today}

\maketitle

\section{Introduction }

Magic squares  have fascinated mathematicians and non-mathematicians for more than 2000 years. They are defined as square matrices filled with nonnegative numbers, so that the entries in each row and column (and sometimes also the diagonal) sum to the same number (the so-called {\it magic constant} of the square).  If the magic constant is one (and there is no condition on the diagonal), they are sometimes called {\it doubly stochastic matrices}. We will use the notions of magic square and doubly stochastic matrix interchangeably in this work.

A quantum version of magic squares has been introduced in \cite{J} (see also \cite{De21d} for an invitation to this work),  
where the entries are no longer nonnegative numbers, but positive semidefinite matrices, and every row and column of the {\it quantum magic square} sums to the identity matrix. 
That is, rows and columns form a {\it positive operator valued measure} (POVM), which is a description of a quantum measurement. 
In other words: if we consider the POVMs defined by the rows, they can be combined into a quantum magic square only if the first elements of every POVM also define a POVM (i.e.\ the first column), and similarly with the second elements, and so on, until the last elements. 
This is the `magical' combination which, to the best of our knowledge, does not correspond to any physical composition (although see the considerations in the Outlook).  
Some features of this composition rule are 
that only POVMs with the same number of outcomes can be composed into a quantum magic square, 
and that the order of outcomes in every POVM matters. 
Note that the definition of a quantum magic square is not a construction but a \emph{condition}---having  positive semidefinite elements that sum to the identity along every row and column. 
This work, together with others, attempts to characterise some aspects of the set of matrices satisfying this condition. 

Now, about 300 years ago {\it Latin squares} were introduced, among others to help construct magic squares. 
A Latin square is an $n\times n$-matrix filled with the numbers $1,\ldots, n$ such that each number appears exactly once in each row and column. 
A special case of Latin squares of size $9$ are known as Sudokus, for which there are additional constraints. 
Latin squares have several applications in mathematics, for example in the design of experiments \cite{bailey_2008} and as multiplication tables of quasigroups \cite{Sm06}. 
Clearly, every Latin square is a non-normalized magic square.

There are several quantum generalisations of Latin squares. 
On the one hand, a quantum version of Sudokus has recently been proposed in \cite{Ne20}. 
On the other hand, a quantum version of Latin squares was introduced in \cite{Mu16b}, where instead of numbers as entries in the Latin square,  (pure) quantum states are used. 
In a quantum Latin square, every entry is a unit vector from $\mathbb{C}^n$ such that every row and column forms an orthonormal basis. 
 In \cite{Mu16b} it is shown that the so-called quantum shift-and-multiply method can be used to construct unitary error bases from quantum Latin squares. Unitary error bases are special bases of the set of complex matrices of size $n\times n$, which are widely used in quantum information theory \cite{UEB}, the most famous example being the Pauli matrices. 

In this work, we study quantum Latin squares and quantum magic squares from a unified perspective, 
and examine their relations.
We prove the following: 
\begin{enumerate}
\item 
Every semiclassical magic square can be purified to a quantum Latin square. 
Equivalently, the matrix convex hull of several sets of semiclassical Latin squares is
the set of  semiclassical magic squares (\Cref{thm_mconv}). 
\item 
The matrix convex hull of all quantum Latin squares is  larger than the set of semiclassical magic squares. This follows from the existence of non-semiclassical quantum Latin squares of even sizes larger than $2$ (\Cref{thm_ntqls}). 
\item  
The quantum Latin squares that arise directly from a classical Latin square are precisely the semiclassical ones (\Cref{prop:sc and ro = Ln}). 
\end{enumerate}

From a mathematical perspective, our work provides new insights into quantum magic squares by leveraging free convexity, and is part of an ongoing effort to establish synergies between free semialgebraic geometry and quantum information, see e.g.\ \cite{De21d,Bl18,Ji20,Bl21,Bu21}.

From a conceptual perspective, 
we characterise quantum magic squares by studying their relation to the ideal world, as the latter admits---presumably---a simpler description. (See \cite{Cu21} for related conceptual considerations). 
While in theoretical physics the ideal world typically involves an infinity (the thermodynamic limit, zero temperature, ground states, etc), 
this is not the case with regard to the description of quantum states, 
precisely because of the purification theorem, 
which expresses the relation to the ideal world as a relation between the parts and the whole, 
where this whole is finite (if the parts are finite).  
By virtue of this theorem, 
quantum states, 
quantum measurements 
and quantum channels can be purified,
that is to say, 
related to their respective ideal quantities 
(pure states, projective measurements and unitary maps) 
of finite dimension (assuming we start from a finite dimension).  
Mathematically, these purifications follow from Stinespring's dilation theorem, 
which says that every completely positive map is a multiplicative map followed by a contraction; 
the multiplicative maps are easy to characterise and here play the role of the ideal map, 
whereas the contraction plays the opposite role to the purification.  
In short, 
purifying corresponds to relating to the ideal world, 
whereas taking the matrix convex hull 
corresponds to a compression to our (non-idealised, imperfect) world. 

The purification theorem applies to every row and column of the quantum magic square, 
i.e.\ it establishes relations to the ideal world for every such `one dimensional' array of objects. 
This work investigates the nature of the purification in `two dimensions', 
namely we study which quantum magic squares can be purified, i.e.\ admit a purification that applies to all rows and columns \emph{simultaneously}. 

Other works related to this study include the quantum generalisations of magic squares  proposed in \cite{Me09b},
and the quantum permutation matrices  considered in \cite{Lu17b}. 
The recent resolution of Euler's officers problem in the quantum case involves the construction of a special  quantum magic square \cite{Ra21}. 

This paper is structured as follows.
In \Cref{sec_prel} we present quantum Latin and magic squares,  
in \Cref{sec:QLS vs QMS} we prove our main results,  
and in \cref{sec:conclusions} we conclude and provide an outlook. 

\section{Setting the stage}\label{sec_prel}

Throughout this work, $\mat_n(S)$ denotes the set of $n\times n$-matrices with entries from the set $S$. Given a matrix $A \in \mat_n(\C)$, $A^*$ denotes the complex conjugate of $A$, and we denote by $\Her_n(\mathbb C)=\{A\in\mat_n(\C)\mid A^* = A\}$ the real vector space of complex Hermitian $n\times n$ matrices. 
$A\geqslant 0$ denotes that the Hermitian matrix $A$ is positive semidefinite, and the convex cone of all positive semidefinite matrices is denoted by ${\rm Psd}_n(\C)$. 
The identity matrix of size $s$ is denoted by $I_s$. 

We will first revisit quantum magic squares and friends (\cref{ssec:qms}),
then define and explain the matrix convex hull (\cref{ssec:mconv}),
and finally consider quantum Latin squares (\cref{ssec:latin}). 

\subsection{Quantum magic squares and  friends} \label{ssec:qms}

Let us start by reviewing some basic definitions and results on quantum magic squares from \cite{J}.

First recall that 
a \emph{positive operator valued measure} (POVM) is a set of positive semidefinite matrices $A_1, \ldots, A_n \in {\rm Psd}_s(\mathbb C)$ such that $\sum_{i=1}^n A_i = I_s$. 
If every $A_i$ is a  projection, i.e.\ $A_i^2=A_i=A_i^*$, then the POVM is called a \emph{projective valued measure} (PVM). 

A quantum magic square is an $n\times n$ grid where every cell contains an $s\times s$ positive semidefinite matrix such that every row and column sums to the identity. 
For this reason we refer to $n$ as the \emph{external size} and to $s$ as the \emph{internal size}. 
A quantum permutation matrix is a quantum magic square where every element $A_{ij}$  is a projection, and in a commuting quantum permutation matrix we additionally require that $A_{ij}A_{k\ell} = A_{k\ell} A_{ij}$ for all $i,j,k,\ell\in \{1,\ldots,n\}$.

 \begin{definition}[Quantum magic squares and friends]
 Given exterior size $n$ and interior size $s$, 
 \begin{enumerate}
\item A \emph{quantum magic square} is a matrix $A \in \mat_n(\Her_s(\C))$ such that every row and  column of $A$ forms a POVM; 

\item 
A \emph{quantum permutation matrix} is a quantum magic square where every row and column forms a PVM;  

\item
A  \emph{commuting quantum permutation matrix} is a quantum permutation matrix where all entries commute. 
\end{enumerate}
\end{definition}

Fixing the interior size to 1 results in the classical matrices we are acquainted with. 
Specifically, quantum magic squares of interior size 1 are precisely the doubly stochastic matrices. 
Moreover, quantum permutation matrices of interior size 1 are the permutation matrices, 
since the only projectors in $\C$ are 0 and 1, 
and the magic square condition ensures that there is exactly one 1 entry per row and column.

Quantum magic squares with exterior size 1 and 2 are also very simple: 

\begin{remark}[Exterior size 1 and 2] \label{rem:extersize12} 
For exterior size $n=1$, the only possible quantum magic square is 
$A = (I_s)$.
For $n=2$ all quantum magic squares have the form
$$\begin{pmatrix}
A & I_s-A \\
I_s - A& A
\end{pmatrix} = \begin{pmatrix}
1 &0\\ 0&1
\end{pmatrix} \otimes A + \begin{pmatrix}
 0&1 \\ 1 &0
\end{pmatrix} \otimes (I_s-A),$$
where $A\in {\rm Psd}_s(\C)$ is such that also $I_s-A$ is positive semidefinite.
\end{remark}

Note that a quantum permutation matrix is a quantum representation \cite{Fr16b} of the hypergraph obtained by considering a $n\times n$ grid and defining a hyperedge for every row and column, as shown in \cref{fig:hypergraph}. 
The interior size of the quantum permutation matrix is precisely the dimension of the Hilbert space in \cite[Definition 7]{Fr16b}. 

\begin{figure}[t]
\includegraphics[width=.25\columnwidth]{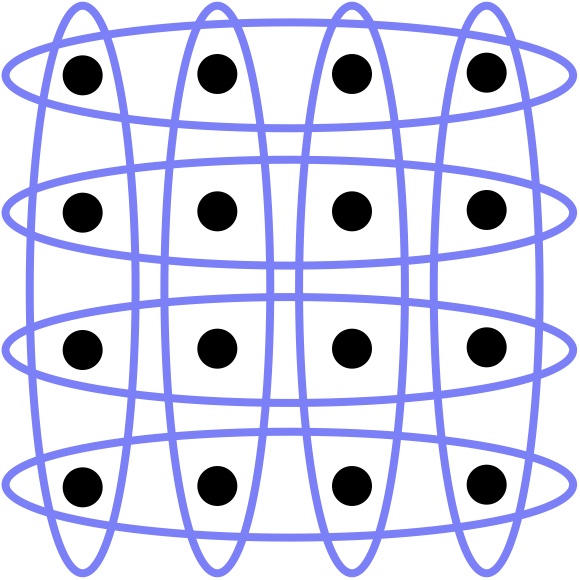}
\caption{A quantum permutation matrix is a quantum representation of the hypergraph shown here for $n=4$.}
\label{fig:hypergraph}
\end{figure}

We now turn  to \emph{semiclassical} magic squares, which were introduced in \cite{J}.
In \Cref{sec:QLS vs QMS} we will shed light on this notion from a broader perspective. 
We denote the permutation group on $n$ elements by $S_n$.  

\begin{definition}[Semiclassical]\label{def:semi} 
$A   \in \mat_n(\Her_s(\C))$ is a 
 \emph{semiclassical magic square} if 
  there exists a POVM $\{Q_{\pi}\}_{\pi \in S_n}$ such that 
\begin{equation}\label{eq:semiclassical}
A = \sum_{\pi \in S_n} P_{\pi} \otimes Q_{\pi},
\end{equation}
where $P_\pi\in{\rm Mat}_n(\C)$ is the permutation matrix corresponding to permutation $\pi$.
\end{definition}

A semiclassical magic square is a quantum magic square, as one can easily verify.

Semiclassical magic squares are clearly not classical, because they contain positive semidefinite matrices as its entries, but other than that they are  essentially classical, 
as they contain positive semidefinite matrices 
$$Q_1,\ldots, Q_{n!}$$ 
permuted through all inner positions. 
Specifically, in a semiclassical magic square, the elements of a single POVM $\{Q_\pi\}$ are summed and arranged according to the permutation matrices $P_{\pi}$, so as to form a quantum magic square.  
In this sense semiclassical magic square provide a constructive `magical' combination of POVMs stemming from a single, `primary' POVM.

Now, every magic square (a.k.a.\ doubly stochastic matrix) is semiclassical---this is precisely the content of Birkhoff--von Neumann's  Theorem.
Moreover, every quantum magic square of exterior size $n=1,2$ is semiclassical---this follows from  \cref{rem:extersize12}.

Let us now consider the union over all internal sizes of 
the set of quantum magic squares and their friends. 
This union is natural in the context of free semialgebraic geometry. 

\begin{definition}[Sets of quantum magic squares and friends]\label{def:cupM}
We denote the set of 
\begin{enumerate}
\item Quantum magic squares by
$$
 \mathsf{M}_{s}^{(n)}  \coloneqq \left\{ A\in {\rm Mat}_n\left({\rm Psd}_s(\C)\right) \mid 
\sum_{i} A_{ij} = \sum_{j} A_{ij} = I_s  \right\} ; 
$$ 
\item Quantum permutation matrices by
$$ \mathsf{P}_{s}^{(n)}  \coloneqq \left\{ A\in \mathsf{M}_{s}^{(n)} \mid 
A_{ij}^2 = A_{ij} =A_{ij}^* \mbox{ for every } i,j \right\} ; 
$$
\item Commuting quantum permutation matrices by
$$
 \mathsf{C}_{s}^{(n)} \coloneqq \left\{ A\in \mathsf{P}_s^{(n)} \mid 
A_{ij}A_{k\ell} = A_{k\ell}A_{ij} \, \mbox{ for every } i,j,k,\ell \right\} ; 
 $$
\item  
Semiclassical magic squares  by 
$$\mathsf{S}_{s}^{(n)} \coloneqq \left\{ A\in \mathsf{M}_{s}^{(n)}  \mid 
A \mbox{ is semiclassical }\right\}; 
$$
\end{enumerate}
and the union over all internal sizes, respectively, by 
$$
\mathsf{M}^{(n)}\coloneqq \bigcup_{s \in \N} \mathsf{M}_{s}^{(n)}, \ 
\mathsf{P}^{(n)}\coloneqq \bigcup_{s \in \N} \mathsf{P}_{s}^{(n)}, \ 
\mathsf{C}^{(n)}\coloneqq \bigcup_{s \in \N} \mathsf{C}_{s}^{(n)}, \ 
\mathsf{S}^{(n)}\coloneqq  \bigcup_{s \in \N} \mathsf{S}_{s}^{(n)}.
$$
\end{definition}

Note that, by definition, 
$$
\mathsf{C}^{(n)}  \subseteq \mathsf{P}^{(n)} \subseteq \mathsf{M}^{(n)} .
$$
In addition, for exterior size $n=1,2,3$, the commuting requirement makes no difference, i.e.\ 
$$
\mathsf{C}^{(n)} = \mathsf{P}^{(n)} .
$$
For $n=1,2$, this follows from \Cref{rem:extersize12}, and the case $n=3$ can be found in \cite{Lu17b}. 
On the other hand, for $n \geqslant 4$, these two sets are different, 
$$
\mathsf{C}^{(n)} \subsetneq \mathsf{P}^{(n)},
$$ 
as can be seen by taking block diagonal sums of quantum permutation matrices.
In addition, in \cite{J} it is shown that commuting quantum permutation matrices are semiclassical, 
$$ 
\textsf{C}^{(n)}\subseteq \textsf{S}^{(n)}.
$$

\subsection{The matrix convex hull} \label{ssec:mconv}

\begin{figure}[t]
\includegraphics[width=.9\columnwidth]{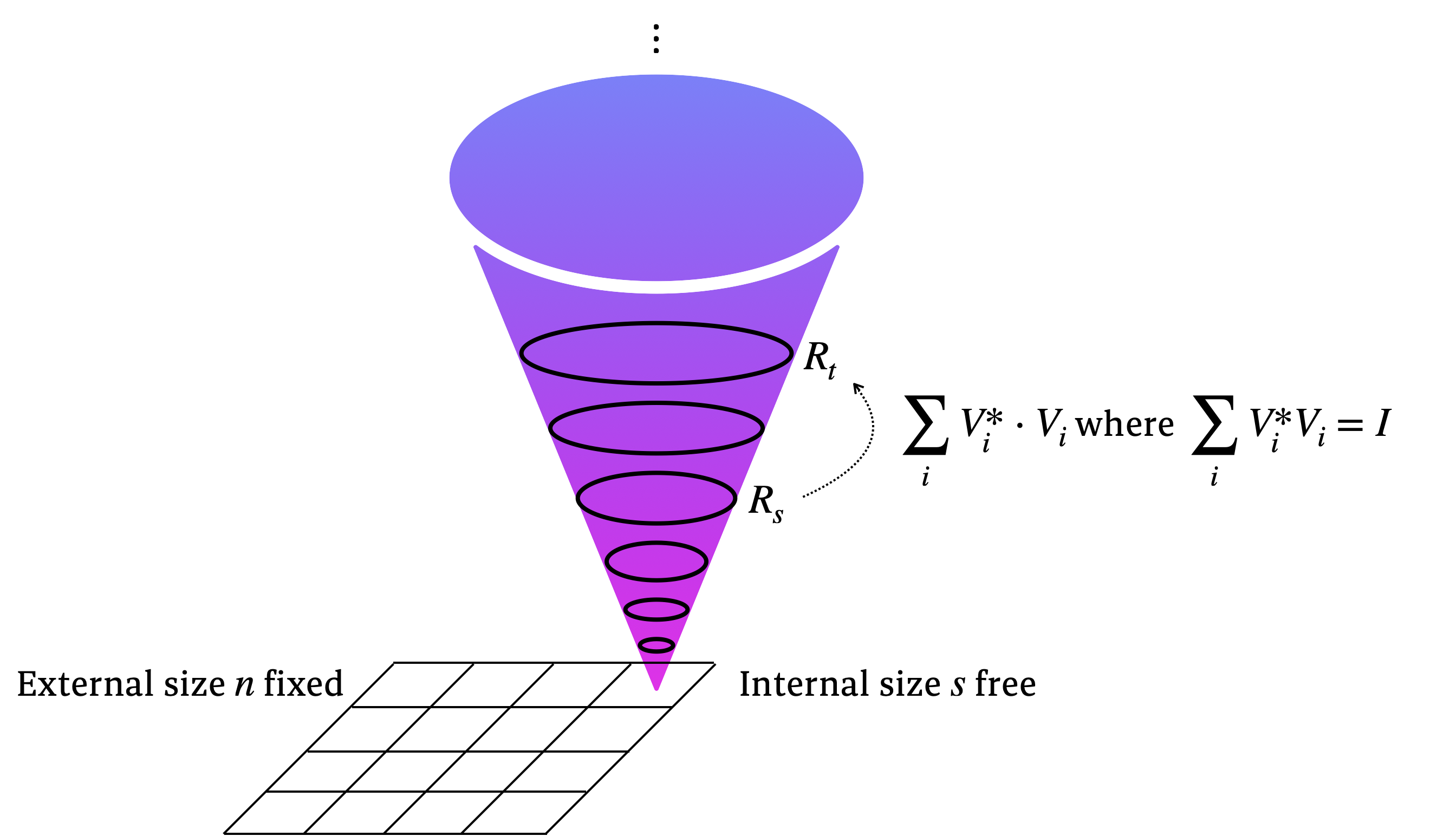}
\caption{The idea of matrix convexity. The external size $n$ remains fixed (in this depiction, $n=4$), whereas matrix convexity expresses a relation between all internal sizes $s$. Specifically, the sets $R_s$ at different internal level must be closed under contractions, which are transformations of the internal matrices with $\sum_{i} V_i^* \cdot V_i$ where $\sum_{i}V_i^*V_i = I$.}
\label{fig:matrixconvex}
\end{figure}

We now consider a dimension-free notion of convexity, called \emph{matrix convexity}, often used in  free semialgebraic geometry and operator algebra, see for example  \cite{Mconvfreesets, paulsen2002completely}.
The matrix convex hull plays a central role in this work, and the idea is the following (see \cref{fig:matrixconvex}). 
We are given a subset of matrices of external size $n$ and internal size $s$,  
$$
R_s \subseteq \mat_n(\hers) 
$$ 
for every $s$, 
and we consider 
$$
R = \bigcup_{s\in \N} R_s.
$$ 
We want to characterise what it means for $R$ to be matrix convex. 
Intuitively, it means that it is closed under a certain notion of contractions, 
which we will now explain step by step. 

Within a given level $s$, 
we consider a set of matrices $A^{(1)},\ldots, A^{(r)} \in R_s$ of internal size $s$ and external size $n$. 
We then consider the $k,\ell$ entry of each of these matrices, and contract them as 
$$
A_{k\ell}^{(1)} ,\ldots, A_{k\ell}^{(r)}  \mapsto \sum_{i=1}^r V_i^* A_{k\ell}^{(i)} V_i  
$$ 
where the $V_i$s are square matrices of size $s$ (i.e.\ the same size as $A_{k\ell}^{(i)}$, which is the internal size of $A$), which satisfy 
$$
\sum_{i=1}^r V_i^* V_i =I_s .
$$
The latter condition guarantees that the $V_i$s  behave like probabilities   in the matrix case, 
since if $s=1$, we recover  the usual notion of convexity.
Matrix convexity requires that, for a given level $s$, 
for any set $A^{(1)},\ldots, A^{(r)} \in R_s$ (with any $r$), 
the result of contracting with any such $V_i$s is still in $R_s$, 
$$
\sum_{i=1}^r V_i^*A^{(i)}V_i\coloneqq \left( \sum_{i=1}^r V_i^*A_{k,\ell}^{(i)}V_i \right)_{k,\ell=1}^n \in R_s. 
$$
Note that the contraction is applied to every cell (labeled $k,\ell$) separately.

Additionally, we can communicate different internal levels $s$. 
This is achieved by letting $V_i$ be rectangular matrices. 
Namely, if $R$ is matrix convex, 
for any levels $s$ and $t$, 
and for any matrices 
$$
V_i \in \mat_{s,t}(\C)  \mbox{ for }i=1,\ldots r \mbox{ with } \sum_i V_i^*V_i =I_t
$$
it must hold that 
$$
\sum_{i=1}^r V_i^*A^{(i)}V_i \in R_t. 
$$ 

In words, we can contract  the matrices of a given level $s$ to a smaller or a larger level $t$, and not leave $R$. 

Finally, in the most general case,  the initial matrices can be taken from different levels, 
$A^{(i)} \in R_{s_i}$ (with $i$ running from 1 to an arbitrary $r$), 
and we take the corresponding matrices $V_i$ to match this initial size $s_i$ and result in size $t$, 
$V_i \in \mat_{s_i,t}(\C)$. 

Note that the external size of the matrices $n$ is fixed through these operations. 
It is only the internal size of the matrices $s$ that is asked to satisfy these closure properties; 
specifically, every cell of the external $n\times n$ matrix is asked to satisfy these conditions.

\begin{definition}[Matrix convex] \label{def:mconv} $\quad$
\begin{enumerate} 
\item Let 
$$
R = \bigcup_{s\in \N} R_s \mbox{ where } R_s \subseteq \mat_n(\hers) . 
$$
$R $ is \emph{matrix convex} if 
for all $r,s_i,t \geqslant 1$,  
for all $A^{(i)}\in R_{s_i}$, 
and 
for all $V_i \in \mat_{s_i,t}(\C)$ with $i=1,\ldots, r$ 
and  $\sum_i V_i^*V_i = I_{t}$,
it holds that 
$$
\sum_{i=1}^r V_i^*A^{(i)}V_i \in R_t.
$$

\item 
For any set $R$, its \emph{matrix convex hull}, denoted ${\rm mconv}(R)$, is the smallest matrix convex superset of $R$, i.e.\ the intersection of all its matrix convex supersets. 
\end{enumerate}
\end{definition}

The matrix convex hull can thus be understood as a `free' version of the convex hull. 
Here free can either mean free of the dimension, or free of the commutation relation---both apply. 

One of the main results from \cite{J} states that the matrix convex hull of the quantum permutation matrices is smaller than the set quantum magic squares, 
\begin{equation} \label{frompaperj}
\mconv(\mathsf{P}^{(n)}) \subsetneq  \mathsf{M}^{(n)} \qquad \mbox{ for all }n\geqslant 3. 
\end{equation}
This is seen as a failure of a matrix version of the Birkoff--von Neumann theorem, since for internal size $s=1$  the convex hull of the set of permutation matrices equals the set of doubly stochastic matrices. 
\eqref{frompaperj} thus says that the characterisation of larger interior sizes is more difficult; 
at least, quantum magic squares cannot be characterised as the matrix convex hull of quantum permutation matrices. 

On the other hand, the matrix convex hull of the \emph{classical} (i.e.\ usual) permutation matrices is precisely the set of semiclassical  magic squares \cite{J},  
\begin{equation}\label{eq:mconvPn1}
\mconv(\textsf{P}^{(n)}_1) = \textsf{S}^{(n)}. 
\end{equation}
In other words, the semiclassical magic squares (elements $A$ in \eqref{eq:semiclassical}) 
 are precisely what we need to add to the usual permutation matrices so that the set is matrix convex. 
In this sense, semiclassical magic squares are a free convex promotion of classical permutation matrices---this is yet another way in which they are semiclassical. 

Overall, we have  
$$
\mconv(\textsf{C}^{(n)}) =  {\mconv}(\textsf{P}^{(n)}_1) = \textsf{S}^{(n)}  
$$ 
because $ \textsf{C}^{(n)}$ is contained in  $\textsf{S}^{(n)}$  and contains $\textsf{P}^{(n)}_1$.

\subsection{Quantum Latin squares} \label{ssec:latin}

We now turn our attention to (quantum)  Latin squares. 
A Latin square  is a square where each number from $\{1, \ldots n\}$ appears exactly once in each row and column---it is like a solved Sudoku without the condition on the smaller squares. 
A Latin square is also a particular kind of (unnormalised) magic square, as in the latter, the numbers need not be taken from a given set.  
Formally, $ L \in \mat_n(\setn)$ is a Latin square if each number from $\setn$ appears exactly once in each row and each column of $L$.

\emph{Quantum} Latin squares are defined  in \cite{Mu16b}; let us now review their definition. 

\begin{definition}[Quantum Latin squares] $L \in \mat_n(\C^n)$  
is a \emph{quantum Latin square} if every row and column of $L$ forms an orthonormal basis of $\C^n$.
\end{definition}

\begin{example}[Easy quantum Latin square] \label{exa:QLS}
The easiest way to construct a quantum Latin square is to take a Latin square and an orthonormal basis of the correct size, and to arrange the basis according to the indices in the Latin square. 
For example, given the $4 \times 4$ Latin square 
$$\begin{array}{|c|c|c|c|}
  \hline
  1 & 2 & 3 & 4 \\
  \hline
  2 & 4& 1&3 \\
  \hline
  3 & 1 & 4 &2 \\
  \hline
  4&3&2&1\\
  \hline
\end{array}
$$
and an orthonormal basis $v_1, \ldots,v_4 \in \C^4,$ we obtain the following quantum Latin square:
$$\begin{array}{|c|c|c|c|}
\hline
v_1 & v_2 & v_3 & v_4 \\
\hline 
v_2 & v_4 & v_1 & v_3 \\
\hline
v_3 & v_1 & v_4 & v_2\\
\hline
v_4&v_3&v_2&v_1 \\
\hline
\end{array}$$
\end{example}

There are more quantum Latin squares than the easy ones, as this example from \cite{Mu16b} shows:
$$\begin{array}{|c|c|c|c|}
\hline
v_1 & v_2 & v_3 & v_4 \\
\hline 
\tfrac{1}{\sqrt{2}}(v_2 - v_3)  & \tfrac{1}{\sqrt{5}}(iv_1 + 2v_4) & \tfrac{1}{\sqrt{5}}(2v_1 + iv_4) & \tfrac{1}{\sqrt{2}}(v_2 + v_3) \\
\hline
\tfrac{1}{\sqrt{2}}(v_2 + v_3) & \tfrac{1}{\sqrt{5}}(2v_1 + iv_4) &\tfrac{1}{\sqrt{5}}(iv_1 + 2v_4)& \tfrac{1}{\sqrt{2}}(v_2 - v_3) \\
\hline
v_4&v_3&v_2&v_1 \\
\hline
\end{array}$$
Here $v_1, \ldots,v_4 \in \C^n$ is again a fixed orthonormal basis. In this square,   four different orthonormal bases can be found in the rows and columns.

\section{From quantum Latin squares to quantum magic squares} 
\label{sec:QLS vs QMS}

In this section we investigate the relation between quantum Latin squares and quantum magic squares. 
We first explain how quantum Latin squares can be understood as quantum magic squares (\cref{ssec:qls}), 
show that the easy quantum Latin squares are precisely the semiclassical ones (\cref{ssec:easyquantumlatin}),  
study quantum Latin squares and friends (\cref{ssec:qlsfriends}), 
as well as these sets under the matrix convex hull  (\cref{ssec:tmconv}). 

\subsection{Quantum Latin squares as quantum magic squares}\label{ssec:qls}
 
We start by noting that quantum Latin squares are essentially equivalent to rank one quantum magic squares. 

\begin{observation} \label{prop:EQLS}
Let $V = (v_{ij})_{i,j=1}^n \in \mat_n(\C^n)$ be a quantum Latin square. Then 
$$\left(v_{ij} v_{ij}^* \right)_{i,j=1}^n \in \mat_n(\hern)$$ 
is a quantum magic square.
Conversely, given a quantum magic square $A = (A_{ij})_{i,j=1}^n \in \mat_n(\hern)$ with $\operatorname{rank}(A_{ij})=1$ for all $i,j$, there exist $a_{ij} \in \C^n$ such that  $A_{ij} = a_{ij}a_{ij}^*$  and $$(a_{ij})_{i,j=1}^n \in \mat_n(\C^n)$$ is a quantum Latin square.  
\end{observation}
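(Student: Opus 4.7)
My plan is to split the statement into its two implications and verify each by a short direct computation that rests on the standard characterisation of an orthonormal basis of $\C^n$ via the resolution of identity.

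For the forward implication, I would start from a quantum Latin square $V = (v_{ij})$ and observe that each $v_{ij}v_{ij}^*$ is automatically a rank-one Hermitian PSD matrix, so the only thing to check is the row and column sums. Fixing a row index $i$, the vectors $v_{i1},\ldots,v_{in}$ form an orthonormal basis of $\C^n$, and from this the identity $\sum_{j} v_{ij}v_{ij}^* = I_n$ is immediate; the column sums follow by the same reasoning applied to columns of $V$.

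For the converse I would take a rank-one quantum magic square $A = (A_{ij})$ and use the spectral decomposition of each rank-one PSD matrix to pick $a_{ij}\in\C^n$ with $A_{ij} = a_{ij}a_{ij}^*$. I then need to show that, for every $i$, the vectors $a_{i1},\ldots,a_{in}$ form an orthonormal basis, and similarly column-wise. The cleanest way is to fix $i$, form the matrix $M_i \in \mat_n(\C)$ whose $j$-th column is $a_{ij}$, and invoke the magic square identity $M_iM_i^{*} = \sum_{j} a_{ij}a_{ij}^{*} = I_n$. Since $M_i$ is square this forces $M_i^{*}M_i = I_n$ as well, which is precisely the statement that the $a_{ij}$ are pairwise orthonormal and of unit norm.

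The only step where I would pause to make sure nothing is being swept under the rug is this last one: the passage from $\sum_{j} a_{ij}a_{ij}^{*} = I_n$ to the orthonormality of the $a_{ij}$ relies on having exactly $n$ vectors in $\C^n$, so that the associated matrix $M_i$ is square and $M_i M_i^{*}=I$ is equivalent to $M_i^{*}M_i=I$. Without the rank-one hypothesis on each $A_{ij}$ this shape-matching fails, which explains both why the rank-one assumption is essential in the converse and why the statement of the observation is tight.
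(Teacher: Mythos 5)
Your proof is correct: the forward direction is the resolution of identity for an orthonormal basis, and the converse correctly extracts $a_{ij}$ from the rank-one spectral decomposition and uses that $M_iM_i^*=I_n$ for a \emph{square} matrix forces $M_i^*M_i=I_n$, i.e.\ orthonormality of the $n$ vectors in each row and column. The paper states this as an observation without proof, and your verification is exactly the standard argument the authors intend, including the correct identification of where the rank-one hypothesis is needed.
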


Let us formalize this now, together with another way to construct quantum magic squares.

\begin{definition}[Rank one quantum magic squares]\label{def_qula}
($i$) 
The set of quantum magic squares of exterior size $n$ and interior size $s$, where each entry matrix has rank one, is denoted by 
$$
\mathsf{R}_s \coloneqq \{A \in \mathsf{M}_s^{(n)} \mid \mathrm{rank}(A_{ij})=1 \mbox{ for all  } i,j  \}. 
$$
and  $\mathsf{R} \coloneqq \bigcup_{s\in \N} \mathsf{R}_s$.

($ii$) 
The set of quantum Latin squares that arise from an easy quantum Latin square as in \cref{exa:QLS}  
and applying \Cref{prop:EQLS} 
is denoted by
\begin{align*}
\mathsf{L}_n:= \left\{\left(v_{L_{ij}}v_{L_{ij}}^*\right)_{i,j=1}^n  \mid  \right. & L \text{ classical Latin square of size } n, \\ 
&\ \left. v_1,\ldots, v_n \in \C^n \text{ orthonormal basis} \right\}.
\end{align*}
\end{definition}

The set $\textsf{R}_n^{(n)}$ is in one-to-one correspondence with the set of quantum Latin squares of size $n$ (up to a phase in the orthonormal bases, of course), as shown in  \Cref{prop:EQLS}. 
For $s>n$ the set $\textsf{R}_s^{(n)}$ is empty, since less than $s$ matrices of rank $1$ cannot sum up to the identity matrix $I_s$.

Every element from $\textsf{R}_n^{(n)}$ is a quantum permutation matrix, since for a unit vector $v \in \C^n,$ $vv^*$ is an orthogonal projection.
For $\textsf{R}_s^{(n)}$  with $s<n$ this is not the case, as the following example shows:
\begin{align*}
\begin{array}{|c|c|c|}
  \hline
  \tfrac{1}{2}e_1e_1^* & \tfrac{1}{2}e_1 e_1^* & e_2e_2^* \\
  \hline
   e_2e_2^* & \tfrac{1}{2}e_1e_1^*& \tfrac{1}{2}e_1e_1^* \\
  \hline
  \tfrac{1}{2}e_1e_1^* &  e_2e_2^* & \tfrac{1}{2}e_1e_1^* \\
  \hline
\end{array} \in \textsf{R}^{(3)}_2
\end{align*}
Here $e_1,e_2 \in \C^2$ is an orthonormal basis. 

\begin{remark} \label{rmk:EQLS-properties}
We have seen that 
$$
\textsf{L}_n \subseteq 
\textsf{R}_n^{(n)} \subseteq 
\textsf{P}_{n}^{(n)}\subseteq 
\textsf{M}_n^{(n)},  $$
and 
$$ 
\textsf{L}_n \subseteq 
\textsf{C}_n^{(n)} 
$$ 
because for two orthonormal vectors $v,w$, their associated rank 1 projectors commute.
\end{remark}

\subsection{Easy quantum Latin square are semiclassical} \label{ssec:easyquantumlatin}

Our first main result  shows that $\textsf{L}_n$ is precisely the set of quantum Latin squares that are semiclassical: 

\begin{theorem} \label{prop:sc and ro = Ln}
$\textsf{L}_n= \textsf{R}_n^{(n)} \cap \textsf{S}_n^{(n)}.$
\end{theorem}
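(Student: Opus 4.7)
My plan is to prove both inclusions separately; the harder direction is $\supseteq$, which uses the rigidity forced by rank-one entries.

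For the easy inclusion $\textsf{L}_n \subseteq \textsf{R}_n^{(n)} \cap \textsf{S}_n^{(n)}$, given $A = (v_{L_{ij}}v_{L_{ij}}^*)_{i,j}$ with $L$ a classical Latin square and $v_1,\ldots,v_n$ an orthonormal basis, membership in $\textsf{R}_n^{(n)}$ is immediate. For semiclassicality, I would decompose $L$ into $n$ pairwise-disjoint permutations $\pi_1,\ldots,\pi_n \in S_n$ defined by $\pi_k(i)=j \iff L_{ij}=k$, and then write $A = \sum_{k=1}^n P_{\pi_k}\otimes v_k v_k^*$. Setting $Q_{\pi_k}=v_k v_k^*$ and $Q_\pi=0$ for every other $\pi \in S_n$ yields a POVM (since $\{v_k\}$ is an ONB), witnessing semiclassicality.

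For the reverse inclusion, let $A\in \textsf{R}_n^{(n)} \cap \textsf{S}_n^{(n)}$ and write $A_{ij}=a_{ij}a_{ij}^*$ with unit vectors $a_{ij}\in\C^n$; by the row/column POVM conditions, each row and column of $(a_{ij})$ is an ONB. Fix a semiclassical decomposition $A=\sum_{\pi\in S_n}P_\pi\otimes Q_\pi$, so
\[
A_{ij}=\sum_{\pi:\,\pi(i)=j}Q_\pi.
\]
Since the left-hand side has rank one and each $Q_\pi\geqslant 0$, every summand $Q_\pi$ with $\pi(i)=j$ must be a nonnegative multiple of $a_{ij}a_{ij}^*$. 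Set $\Pi:=\{\pi:Q_\pi\neq 0\}$; for each $\pi\in\Pi$ there is $\lambda_\pi>0$ with $Q_\pi=\lambda_\pi a_{i,\pi(i)}a_{i,\pi(i)}^*$ for every $i$, so the matrix $a_{i,\pi(i)}a_{i,\pi(i)}^*$ is independent of $i$. Hence there is a unit vector $v_\pi$ with $a_{i,\pi(i)}=v_\pi$ up to phase for every $i$.

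The main obstacle now is the counting step: I need to show the set $W:=\{v_\pi : \pi\in\Pi\}$ (considered up to phase) has exactly $n$ elements and forms an ONB. Upper bound: every $\pi\in\Pi$ contributes $v_\pi$ to \emph{every} row (at column $\pi(i)$), so all vectors of $W$ appear in each row; since each row is an ONB of $\C^n$, $|W|\leqslant n$. Lower bound: each cell $(i,j)$ has $A_{ij}\neq 0$, hence some $\pi\in\Pi$ has $\pi(i)=j$, which gives $a_{ij}=v_\pi\in W$; thus each row's ONB lies in $W$, so $|W|\geqslant n$. Therefore $|W|=n$ and $W$ is an ONB $\{w_1,\ldots,w_n\}$. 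Finally, defining $L_{ij}=k \iff a_{ij}=w_k$ gives a Latin square (each row and column of $(a_{ij})$ is a permutation of $\{w_1,\ldots,w_n\}$), and $A_{ij}=w_{L_{ij}}w_{L_{ij}}^*$, so $A\in \textsf{L}_n$.
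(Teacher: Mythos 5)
Your proof is correct, and on the harder inclusion it is organized differently from the paper's. Both arguments start from the same key observation: since each $A_{ij}$ is rank one and the $Q_\pi$ are positive semidefinite, every $Q_\pi$ contributing to the cell $(i,j)$ must be a nonnegative multiple of the rank-one projector in that cell (your transposed convention $A_{ij}=\sum_{\pi(i)=j}Q_\pi$ versus the paper's $\pi(j)=i$ is immaterial). From there the paper works ``permutation-first'': it selects representatives $\pi_1,\ldots,\pi_n$ from the first column with $Q_{\pi_i}\neq 0$, shows via linear independence of the column entries that these permutations are pairwise disjoint and hence hit every cell exactly once, and then proves $Q_\pi=0$ for all remaining $\pi$, arriving at the explicit decomposition $A=\sum_{i=1}^n P_{\pi_i}\otimes q_iq_i^*$ before invoking \Cref{prop:EQLS}. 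You instead work ``vector-first'': each nonzero $Q_\pi$ determines a single unit vector $v_\pi$ occupying the cells $(i,\pi(i))$, and a two-sided counting bound (at most $n$ since all $v_\pi$ sit inside one row's orthonormal basis; at least $n$ since every nonzero cell is covered by some $\pi$ with $Q_\pi\neq 0$) shows the set of these vectors is exactly an orthonormal basis arranged according to a Latin square. Your route never needs to determine the support of the semiclassical decomposition---you sidestep the paper's step that the non-selected $Q_\pi$ vanish---which makes the bookkeeping slightly lighter; the paper's route yields the stronger structural by-product that the decomposition is supported on exactly $n$ pairwise disjoint permutations. Your easy inclusion, with the explicit POVM $Q_{\pi_k}=v_kv_k^*$ built from the disjoint-permutation decomposition of $L$, is also fine and is in fact the same construction the paper uses in \Cref{l:pov is sc}, whereas the paper's proof of \Cref{prop:sc and ro = Ln} shortcuts this via $\mathsf{L}_n\subseteq\mathsf{C}_n^{(n)}\subseteq\mathsf{S}_n^{(n)}$.
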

\begin{proof}
First note that any element from $\textsf{L}_n$ is semiclassical, 
since  $\textsf{L}_n \subseteq \textsf{C}_n^{(n)} \subseteq \textsf{S}_n^{(n)}$. 
Since $\textsf{L}_n \subseteq  \textsf{R}_n^{(n)} $ too, this proves that the left hand side is included in the right hand side. 

For the other inclusion, let  $A = \sum_{\pi \in S_n} P_{\pi} \otimes Q_{\pi}$ be a semiclassical  magic square of interior and exterior size $n$ such that each entry has rank 1.
Hence every $Q_{\pi}$ has rank at most one, 
and thus the sum of two such $Q_\pi$  has rank at most 
one if and only if they are linearly dependent. 
Now consider 
$$
\Pi_{ij} \coloneqq \{\pi \in S_n \mid (P_{\pi})_{i,j} = 1 \}=\{ \pi \in S_n\mid \pi(j)=i\}.
$$
For $\pi, \tilde{\pi} \in \Pi_{ij}$ the above arguments imply that $Q_\pi$ and $Q_{\tilde\pi}$ are linearly dependent, so 
 \begin{equation}
 \label{eq_eins}\dim({\rm span}\{Q_{\pi} \mid \pi \in \Pi_{ij} \}) = 1 \quad \mbox{for all }i,j
 \end{equation} 
 Note that $A_{ij} \neq 0$ for all $i,j$, because each $A_{ij}$ is rank one and otherwise the corresponding row or column of $A$ could not sum up to $I_n$.

Now for every $i \in \{1,\ldots,n\}$ choose a  $\pi_i \in \Pi_{i1}$ with  $Q_{\pi_i} \neq 0$. 
By a similar rank argument as above, we see that $A_{i1} \notin {\rm span}\{A_{j1}\},$ and hence $Q_{\pi_i} \notin {\rm span}\{Q_{\pi_j}\}$ for $i\neq j$.
In view of (\ref{eq_eins}) this implies that each  $\Pi_{k\ell}$ contains at most one of the $\pi_i$.
So we have found permutations $\pi_1,\ldots,\pi_n\in S_n$ that are completely disjoint, in the sense that no two of them coincide on some input $\ell$. But then for each $k,\ell$ there must be some $i$ with $\pi_i(\ell)=k$. Hence  each $\Pi_{k\ell}$ contains precisely one of the $\pi_i$.

We now claim that $Q_{\pi} = 0$ for  $\pi \in S_n \setminus \{\pi_1,...,\pi_n\}$. 
 For $\ell=\pi(1)$ we have $\pi \in \Pi_{\ell 1}$, and since $\pi \neq \pi_{\ell}$ there exists some $j\neq 1$ with $\pi(j) \neq \pi_{\ell}(j)$. 
So $\pi \in \Pi_{\pi(j)j}$ and $\pi_\ell\notin \Pi_{\pi(j)j}$, and by the above considerations  we find an $i\neq \ell$ with $\pi_i \in \Pi_{\pi(j)j}$. This implies 
$$Q_{\pi} \in {\rm span}\{Q_{\pi_\ell}\} \cap {\rm span}\{Q_{\pi_i}\} = \{0\},
$$ 
as claimed.

Altogether we have shown $A = \sum_{i=1}^n P_{\pi_i} \otimes Q_{\pi_i}$, and  the $Q_{\pi_i}$ are all rank 1 squares, i.e.\ $Q_{\pi_i} = q_iq_i^*$ for certain $q_i\in\C^n$. 
 \Cref{prop:EQLS} implies that $q_1,\ldots, q_n$  form an orthonormal basis, and that 
 $A$ is the quantum Latin square constructed from this orthonormal  basis and the classical Latin square 
$$\sum_{j=1}^n  j \, P_{\pi_j}.$$ 
\end{proof}

\subsection{Quantum Latin squares and friends} \label{ssec:qlsfriends}

Instead of taking an orthonormal basis to construct an easy quantum Latin square, 
we can take  a POVM or a PVM, and  arrange its elements according to a classical Latin square of size $n$, similarly to  \Cref{exa:QLS}. 
These result in a certain type of quantum magic squares, 
where there is a \emph{single} POVM (or PVM) that is placed in a permuted way in every row and column. 
We denote them  $\mathsf{POVML}_s^{(n)}$ and $\mathsf{PVML}_s^{(n)}$, respectively. 

\begin{definition}[Quantum Latin squares and friends]\label{def:Latin}
We define
\begin{align*}
\mathsf{POVML}_s^{(n)}&:=\left\{\left(P_{L_{ij}}\right)_{i,j=1}^n \mid P_1,\ldots, P_n \in \hers \text{ POVM, } L \text{ Latin square} \right\} \\
\mathsf{PVML}_s^{(n)} &:= \left\{\left(P_{L_{ij}}\right)_{i,j=1}^n \mid P_1,\ldots, P_n \in \hers \text{ PVM, } L \text{ Latin square} \right\} 
\end{align*}
as well as 
$$
\mathsf{POVML}^{(n)} \coloneqq \bigcup_{s\in \N} \mathsf{POVML}_s^{(n)}, \quad 
\mathsf{PVML}^{(n)} \coloneqq \bigcup_{s \in \N} \mathsf{PVML}_s^{(n)}.
$$
\end{definition}

\begin{proposition}\label{l:pov is sc}
$\mathsf{POVML}^{(n)} \subseteq \mathsf{S}^{(n)}$ and 
$\mathsf{PVML}^{(n)}\subseteq \mathsf{C}^{(n)}\cap\mathsf{POVML}^{(n)}.$
\end{proposition}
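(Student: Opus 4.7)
The plan is to exploit the canonical decomposition of a Latin square of size $n$ into $n$ disjoint permutation matrices. For a Latin square $L$ and each symbol $k \in \{1,\ldots,n\}$, the positions where $L$ equals $k$ form the support of a unique permutation matrix $P_{\sigma_k}$, and the matrices $P_{\sigma_1},\ldots,P_{\sigma_n}$ have pairwise disjoint supports summing to the all-ones matrix; equivalently, the symbols partition the $n \times n$ grid into $n$ pairwise disjoint permutations $\sigma_1,\ldots,\sigma_n \in S_n$. The Latin property is exactly what makes this work: each $\sigma_k$ is well-defined because $k$ appears once per row and once per column.

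For the first inclusion, given $A = (P_{L_{ij}})_{i,j=1}^n \in \mathsf{POVML}_s^{(n)}$ with POVM $P_1,\ldots,P_n$, I would set
\[
Q_{\sigma_k} := P_k \quad (k=1,\ldots,n), \qquad Q_\pi := 0 \text{ for } \pi \in S_n \setminus \{\sigma_1,\ldots,\sigma_n\}.
\]
Since the $\sigma_k$ are distinct and $\sum_k P_k = I_s$, the family $\{Q_\pi\}_{\pi\in S_n}$ is a POVM indexed by $S_n$. A direct block computation of $\sum_{\pi\in S_n} P_\pi \otimes Q_\pi$ gives, at position $(i,j)$, the single surviving term $Q_{\sigma_{L_{ij}}} = P_{L_{ij}} = A_{ij}$. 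Hence $A$ has the form \eqref{eq:semiclassical} of \cref{def:semi}, so $A \in \mathsf{S}_s^{(n)}$, proving $\mathsf{POVML}^{(n)} \subseteq \mathsf{S}^{(n)}$.

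For the second statement, $\mathsf{PVML}^{(n)} \subseteq \mathsf{POVML}^{(n)}$ is immediate from the definitions, since a PVM is a POVM. For $\mathsf{PVML}^{(n)} \subseteq \mathsf{C}^{(n)}$, I would use the standard fact that the elements of any PVM $P_1,\ldots,P_n$ are pairwise orthogonal (and therefore pairwise commuting) projections: for $v$ in the range of $P_j$, expanding $v = \sum_i P_i v$ and pairing with $v$ gives $\sum_{i\neq j} \norm{P_i v}^2 = 0$, so $P_i P_j = 0$ for $i\neq j$. Since every entry of $A = (P_{L_{ij}})$ lies in $\{P_1,\ldots,P_n\}$, the entries are pairwise commuting projections whose rows and columns form PVMs by the Latin property, placing $A$ in $\mathsf{C}_s^{(n)}$. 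The whole argument is essentially formal bookkeeping; I do not anticipate any real obstacle beyond the Latin-square decomposition used in the first part and the orthogonality of PVM elements used in the second.
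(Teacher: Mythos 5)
Your proof is correct and follows essentially the same route as the paper: you decompose the Latin square into $n$ disjoint permutation matrices, pad the given POVM with zero matrices to obtain a POVM indexed by $S_n$, and read off the semiclassical form \eqref{eq:semiclassical}, while the second inclusion rests on the commutativity of elements of a single PVM, exactly as in the paper. Your write-up is only slightly more explicit than the paper's about the zero-padding and the orthogonality argument, which is fine.
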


\begin{proof}
Let $A\in\mathsf{POVML}_s^{(n)}$ be generated by the POVM $Q_1,\ldots, Q_n \in \hers$ and the  Latin square $L$. 
We define permutation matrices $P_1,\ldots, P_n$ by setting for all 
$i,k,\ell \in \{1,\ldots n\}$:
\begin{align*}
(P_i)_{k,\ell} = \begin{cases} 1 &\mbox{if } L_{k\ell} = i \\
0 & \mbox{else.} \end{cases}
\end{align*}
Then clearly 
$$ A = \sum_{i=1}^n P_i \otimes Q_i,$$
which shows that $A$ is semiclassical and proves the first inclusion.
The second inclusion is obvious, since projectors from one PVM always commute.
\end{proof}

The inclusions   observed so far are shown in \Cref{fig:inc}.

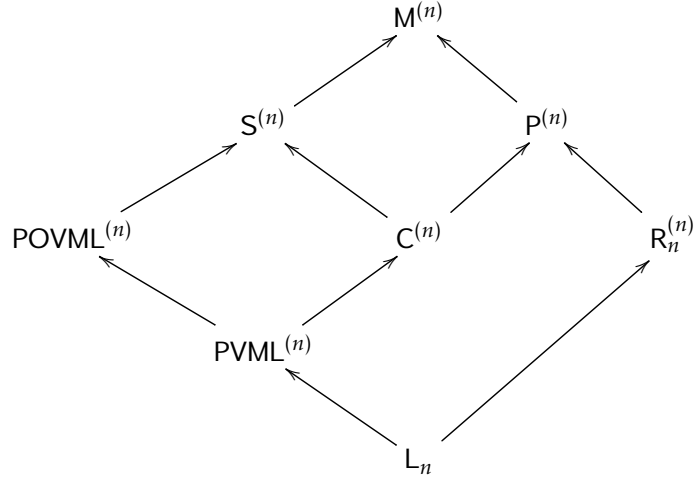
\begin{figure}[h!]
\begin{center}
\hfil
\xymatrix{
&&\mathsf{M}^{(n)}&&\\
&\mathsf{S}^{(n)}\ar@{->}[ur]&&\mathsf{P}^{(n)}\ar@{->}[ul]&\\
\mathsf{POVML}^{(n)}\ar@{->}[ur]&&\mathsf{C}^{(n)}\ar@{->}[ul]\ar@{->}[ur]&&\mathsf{R}_{n}^{(n)}\ar@{->}[ul]\\
&\mathsf{PVML}^{(n)}\ar@{->}[ul]\ar@{->}[ur]&&&\\
&&\mathsf{L}_n\ar@{->}[ul]\ar@{->}[uurr]&&
}
\hfil
\end{center}
\caption{Inclusions between sets of quantum magic squares, where  $\mathsf{A} \to \mathsf{B}$ denotes $\mathsf{A}\subseteq \mathsf{B}$. 
That easy quantum Latin squares are PVM Latin squares, which are POVM Latin squares, 
$\mathsf{L}_n \subseteq \mathsf{PVML}^{(n)} \subseteq \mathsf{POVML}^{(n)}$, is the case by definition,
and $\mathsf{POVML}^{(n)} \subseteq \mathsf{C}^{(n)}$ by \cref{l:pov is sc}. 
That commuting quantum permutation matrices are quantum permutation matrices, which are quantum magic squares, 
$\mathsf{C}^{(n)} \subseteq \mathsf{P}^{(n)} \subseteq \mathsf{M}^{(n)}$, 
is true by definition.
That quantum Latin squares are rank one, which are quantum permutation matrices, 
$ \textsf{L}_n \subseteq \textsf{R}_n^{(n)} \subseteq \textsf{P}_{n}^{(n)}$,
is stated in \cref{rmk:EQLS-properties}. 
Finally, 
$ \textsf{C}^{(n)}  \subseteq \textsf{S} ^{(n)} \subseteq \textsf{M}^{(n)}$ was shown in \cite{J},
and  $\mathsf{POVML}^{(n)} \subseteq \textsf{S}^{(n)} $ by \cref{l:pov is sc}. 
}
\label{fig:inc}
\end{figure}

\subsection{Taking the matrix convex hull}\label{ssec:tmconv}

Let us now examine the matrix convex hulls of some quantum magic squares.

\begin{theorem}\label{thm_mconv}
\begin{align*}\mathsf{S}^{(n)} &=  
{\mconv}(\mathsf{C}^{(n)})= 
{\mconv}(\mathsf{POVML}^{(n)}) = 
{\mconv}(\mathsf{PVML}^{(n)}) = 
{\mconv}(\mathsf{L}_n).
\end{align*}
\end{theorem}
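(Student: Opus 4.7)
The plan is to show all four matrix convex hulls coincide with $\mathsf{S}^{(n)}$ by proving two containments: an outer one, $\mconv(X) \subseteq \mathsf{S}^{(n)}$ for each $X$ appearing, and a single inner one, $\mathsf{S}^{(n)} \subseteq \mconv(\mathsf{L}_n)$. Because \cref{fig:inc} gives $\mathsf{L}_n\subseteq\mathsf{PVML}^{(n)}\subseteq\mathsf{POVML}^{(n)}\subseteq\mathsf{S}^{(n)}$ and $\mathsf{L}_n\subseteq\mathsf{C}^{(n)}\subseteq\mathsf{S}^{(n)}$, monotonicity of $\mconv$ will sandwich all four hulls between $\mconv(\mathsf{L}_n)$ and $\mconv(\mathsf{S}^{(n)})$; once the two extreme hulls agree with $\mathsf{S}^{(n)}$, the middle ones are forced to as well.

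For the outer containment, the key is that $\mathsf{S}^{(n)}$ is itself matrix convex. Given $A^{(i)}=\sum_{\pi\in S_n}P_\pi\otimes Q_\pi^{(i)}\in\mathsf{S}^{(n)}_{s_i}$ and $V_i\in\mat_{s_i,t}(\C)$ with $\sum_i V_i^*V_i=I_t$, a direct computation cell by cell yields
$$
\sum_i V_i^* A^{(i)} V_i \;=\; \sum_{\pi\in S_n} P_\pi\otimes \tilde Q_\pi,\qquad \tilde Q_\pi := \sum_i V_i^* Q_\pi^{(i)} V_i,
$$
and the $\tilde Q_\pi$ are positive semidefinite with $\sum_\pi \tilde Q_\pi = \sum_i V_i^* V_i = I_t$, so they form a POVM. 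Hence $\sum_i V_i^* A^{(i)} V_i\in\mathsf{S}^{(n)}$, proving $\mathsf{S}^{(n)}$ is matrix convex, and consequently $\mconv(X)\subseteq\mathsf{S}^{(n)}$ for any $X\subseteq\mathsf{S}^{(n)}$.

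For the inner containment $\mathsf{S}^{(n)}\subseteq\mconv(\mathsf{L}_n)$, I would leverage equation \eqref{eq:mconvPn1}, which says $\mconv(\mathsf{P}^{(n)}_1)=\mathsf{S}^{(n)}$, and therefore reduce to proving $\mathsf{P}^{(n)}_1\subseteq\mconv(\mathsf{L}_n)$. Fix a classical permutation matrix $P_\pi$. A standard extension of a permutation to a Latin square (e.g.\ fix disjoint permutations $\sigma_1=\pi,\sigma_2,\ldots,\sigma_n$ partitioning the grid and set $L_{\sigma_k(j),j}:=k$) yields a Latin square $L$ with $L_{ij}=1$ iff $i=\pi(j)$. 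Let $e_1,\ldots,e_n$ be the standard orthonormal basis of $\C^n$, and form
$$
B \;=\; \bigl(e_{L_{ij}}e_{L_{ij}}^*\bigr)_{i,j=1}^n \;\in\;\mathsf{L}_n.
$$
Then the contraction $V:=e_1\in\mat_{n,1}(\C)$ satisfies $V^*V=I_1$, and
$$
(V^* B V)_{ij} \;=\; e_1^* e_{L_{ij}}e_{L_{ij}}^* e_1 \;=\; \delta_{L_{ij},1} \;=\; (P_\pi)_{ij},
$$
so $P_\pi = V^* B V\in\mconv(\mathsf{L}_n)$. This gives $\mathsf{P}^{(n)}_1\subseteq\mconv(\mathsf{L}_n)$, hence $\mathsf{S}^{(n)}=\mconv(\mathsf{P}^{(n)}_1)\subseteq\mconv(\mconv(\mathsf{L}_n))=\mconv(\mathsf{L}_n)$ by idempotence of $\mconv$.

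Combining the two containments with the sandwich from \cref{fig:inc} yields the desired chain of equalities. The only step requiring care is the construction of the Latin square extending a prescribed permutation, but this is a classical and elementary fact; the rest is a bookkeeping exercise in the definition of matrix convexity, so I do not expect any genuine obstacle.
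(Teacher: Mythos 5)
Your proof is correct and takes essentially the same route as the paper: the key step, showing $\mathsf{P}_1^{(n)}\subseteq\mconv(\mathsf{L}_n)$ by extending a permutation matrix to a classical Latin square and compressing the resulting easy quantum Latin square with the single isometry $e_1$, is exactly the paper's argument, and the remaining equalities follow as in the paper from the inclusions of \cref{fig:inc} together with \eqref{eq:mconvPn1}. Your only addition is the explicit verification that $\mathsf{S}^{(n)}$ is matrix convex, which the paper instead obtains for free from the identity $\mathsf{S}^{(n)}=\mconv(\mathsf{P}_1^{(n)})$.
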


\begin{proof}
In view of the inclusions of \cref{fig:inc} and Eq.\  \eqref{eq:mconvPn1} it suffices to prove that 
$\mathsf{P}_1^{(n)}\subseteq \mconv(\mathsf{L}_n)$. 
So let $P\in\mathsf{P}_1^{(n)}$ be a  permutation matrix, and denote by $L$ the classical Latin square obtained from $P$ by replacing its zero entries with numbers $2,\ldots n$ suitably. 
Let $v_1,\ldots, v_n$ be an orthonormal basis of $\C^n$, and let $A\in \mathsf{L}_n$ be the easy quantum Latin square constructed from $L$ and this basis. 
Then $P=v_1^*Av_1\in \mconv(\mathsf{L}_n)$, which finishes the proof.
\end{proof}

This shows that the left hand side of  \cref{fig:inc}  simplifies greatly after taking the matrix convex hull. Let us now consider what happens to the right hand side of   \cref{fig:inc} under the matrix convex hull. 

First note that from $\mathsf{L}_n \subseteq \mathsf{R}_n^{(n)}$ we immediately obtain 
$$
\mconv(\mathsf{L}_n) 
\subseteq \mconv(\mathsf{R}_n^{(n)}).
$$ 

\begin{theorem}\label{thm_ntqls}
For even $n\geqslant 4$ we have $\mconv(\mathsf{L}_n) \subsetneq \mconv(\mathsf{R}_n^{(n)})$, and in particular $\mathsf{L}_n\subsetneq \mathsf{R}_n^{(n)},$ i.e.\ there exist quantum Latin squares which are not semiclassical. 
\end{theorem}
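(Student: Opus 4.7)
My strategy is to reduce both inclusions to a single statement: for every even $n\geqslant 4$, exhibit a quantum Latin square of size $n$ that is not easy, i.e., not in $\mathsf{L}_n$. This suffices because Theorem \ref{prop:sc and ro = Ln} gives $\mathsf{L}_n=\mathsf{R}_n^{(n)}\cap\mathsf{S}_n^{(n)}$, so such an example lies in $\mathsf{R}_n^{(n)}\setminus\mathsf{L}_n$, and Theorem \ref{thm_mconv} gives $\mconv(\mathsf{L}_n)=\mathsf{S}^{(n)}$, so the same example witnesses $\mconv(\mathsf{L}_n)\subsetneq\mconv(\mathsf{R}_n^{(n)})$. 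To detect non-easiness I would use the following counting invariant: an easy QLS of size $n$ has exactly $n$ distinct entries up to unit complex scalar (its underlying orthonormal basis, permuted by a classical Latin square), so any QLS with more than $n$ projective classes of entries cannot be easy.

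For $n=4$ the Musto--Vicary example reproduced in the excerpt suffices: the entries $(v_2\pm v_3)/\sqrt{2}$, $(iv_1+2v_4)/\sqrt{5}$, and $(2v_1+iv_4)/\sqrt{5}$ are not scalar multiples of any $v_k$, so together with $v_1,\ldots,v_4$ there are at least $8$ projective classes, and the example is not easy.

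For $n=4k$ with $k\geqslant 2$ I would use a tensor construction. Given a QLS $A=(a_{ij})_{i,j=1}^{n_1}$ and a classical Latin square $L$ of size $n_2$ with symbols realised by an orthonormal basis $e_1,\ldots,e_{n_2}$ of $\C^{n_2}$, set
\[
N_{(i_1,i_2),(j_1,j_2)}\;:=\;a_{i_1 j_1}\otimes e_{L_{i_2 j_2}}\;\in\;\C^{n_1}\otimes\C^{n_2}\cong\C^{n_1 n_2}.
\]
Since each row and column of $L$ is a permutation, the row (resp.\ column) of $N$ indexed by $(i_1,i_2)$ (resp.\ $(j_1,j_2)$) is the tensor product of an orthonormal basis of $\C^{n_1}$ with the standard basis of $\C^{n_2}$, so $N$ is a QLS of size $n_1 n_2$; and its number of projective classes of entries equals $n_2$ times that of $A$, which exceeds $n_1 n_2$ whenever $A$ is non-easy. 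Applied with $A$ the Musto--Vicary example this covers all $n=4k$.

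The main obstacle is the residual case $n\equiv 2\pmod 4$. Here the tensor trick above does not apply (no non-easy QLS of size $2$ exists, and $n$ is not a multiple of $4$), and a naive block-diagonal embedding of Musto--Vicary into an $n\times n$ layout also fails: the off-diagonal blocks of such an embedding would simultaneously have to lie in $\mathrm{span}\{v_1,\ldots,v_4\}$ (to keep columns orthogonal) and in its orthogonal complement (to keep rows orthogonal). I would resolve this by producing one non-easy QLS of size $6$ explicitly---via a Musto--Vicary-style ansatz using mutually unbiased bases of $\C^6$ or a complex Hadamard construction, arranged so that at least one entry is a genuine superposition---and then tensoring with classical Latin squares of size $k$ to cover all $n=6k$. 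The remaining values of $n\equiv 2\pmod 4$ that are neither multiples of $4$ nor of $6$ are expected to admit analogous small-size direct constructions, which can likewise be extended by the tensor product; non-easiness is verified in every case by the counting invariant above.
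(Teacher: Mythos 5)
Your reduction is sound: by \Cref{prop:sc and ro = Ln} and \Cref{thm_mconv}, any element of $\mathsf{R}_n^{(n)}\setminus\mathsf{L}_n$ (its internal size is forced to be $n$, so it cannot lie in $\mathsf{S}_s^{(n)}$ for any $s$) witnesses both $\mathsf{L}_n\subsetneq\mathsf{R}_n^{(n)}$ and $\mconv(\mathsf{L}_n)=\mathsf{S}^{(n)}\subsetneq\mconv(\mathsf{R}_n^{(n)})$. Your counting invariant (an easy quantum Latin square has exactly $n$ entries up to phase) is a correct non-easiness test, the Musto--Vicary square settles $n=4$, and the tensor construction with a classical Latin square is a valid way to propagate non-easiness from size $n_1$ to size $n_1 n_2$. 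So the cases $4\mid n$ are genuinely proved, and for those sizes you even get something the paper only obtains indirectly (via the remark citing Roberson--Schmidt), namely an explicit quantum Latin square of size $n$ outside $\mathsf{L}_n$.

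However, there is a genuine gap for $n\equiv 2\pmod 4$, which the theorem also claims. For these sizes you never produce the required example: the size-$6$ square via mutually unbiased bases or complex Hadamard matrices is only announced, and the remaining sizes ($n=10,14,22,26,\dots$, i.e.\ $n=2m$ with $m$ odd and not divisible by $3$) are merely ``expected'' to work. Since the tensor trick only multiplies sizes, your plan needs either one uniform construction for all even $n$ or infinitely many ad hoc base cases, and none is supplied; as stated, the proof covers only multiples of $4$. The paper's proof sidesteps exactly this difficulty by not exhibiting a non-easy quantum Latin square at all: for $n=2m$ it takes two easy squares $A,B\in\mathsf{L}_m$ built from one Latin square and two non-commuting orthonormal bases of $\C^m$, shows $A\oplus B\notin\mathsf{S}^{(n)}$ using the argument of \cite[Corollary 15]{J}, and shows $A\oplus B\in\mconv(\mathsf{R}_n^{(n)})$ by completing the zero blocks with standard-basis projectors to an explicit element of $\mathsf{R}_n^{(n)}$ and compressing with $V=\begin{pmatrix}I_m\\ 0\end{pmatrix}$; the strictness $\mathsf{L}_n\subsetneq\mathsf{R}_n^{(n)}$ then follows formally from the strictness of the hulls. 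This works uniformly for every even $n\geqslant 4$. If you want to keep your more constructive route, you must close the missing cases, e.g.\ by a general deformation argument (such as rotating the two vectors of a $2\times 2$ intercalate of a classical Latin square inside their two-dimensional span, which yields a quantum Latin square with $n+2$ projective classes for every $n\geqslant 4$) rather than leaving the sizes $n\equiv 2\pmod 4$ to unverified constructions.
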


\begin{proof}
Write $n=2m$, let $L$ be a classical Latin square of size $m,$ and let $v_1,\ldots,v_m \in \C^m, w_1,\ldots,w_m \in \C^{m}$ be two orthonormal bases  such that 
$$
v_1v_1^*w_1w_1^* \neq w_1w_1^*v_1v_1^*.
$$ 
Let $A,B \in \mathsf{L}_{m}$ be the quantum magic squares generated by $L$ and these two bases, respectively. The following block matrix is a quantum permutation matrix of size $n$:
$$ 
A {\oplus} B \coloneqq
\begin{pmatrix}
A & 0 \\ 0 & B
\end{pmatrix} \in \mathsf{P}^{(n)} . 
$$
For example, for $n=4 $ and $L=\left(\begin{array}{cc}1 & 2 \\2 & 1\end{array}\right)$, 
this is the matrix
$$
\begin{pmatrix}
v_1v_1^* &v_2v_2^* &0&0 \\
v_2v_2^* &v_1v_1^* & 0&0 \\
0&0&w_1w_1^* & w_2w_2^* \\
0&0&w_2 w_2^* & w_1w_1^*
\end{pmatrix}.
$$
By our choice of bases we have that $A{\oplus} B \notin \mathsf{C}^{(n)}$, 
and the proof of \cite[Corollary 15]{J}  shows that $A {\oplus} B \notin \mathsf{S}^{(n)}$. 

Now we claim that $A {\oplus} B \in \mconv(\mathsf{R}_n^{(n)})$. To see this, let 
$$\iota: \C^{m} \to \C^n : v \mapsto \begin{pmatrix}
v \\ 0
\end{pmatrix}.$$
Then $\iota(v_1),\ldots,\iota(v_m)$ and $\iota(w_1),\ldots,\iota(w_m)$ are orthonormal systems in $\C^n,$ which are both extended to an orthonormal basis by the standard basis vectors $e_{m+1},\ldots, e_n$.
Now let $\iota(A)$ be the matrix constructed from  $\iota(v_1),\ldots,\iota(v_m)$ and $L$ as in \Cref{def_qula} ($i$) (except that  $\iota(v_1),\ldots,\iota(v_m)$ only form half of an orthonormal basis here). In the same way we construct  $\iota(B)$ from $\iota(w_1),\ldots,\iota(w_m)$  and $L$, as well as $E$ from $e_{m+1},\ldots e_n$ and $L$.
We then obtain
$$C = \begin{pmatrix}
\iota(A) & E \\ E & \iota(B)
\end{pmatrix} \in \mathsf{R}_n^{(n)},
$$
since  each row and each column contains the rank one squares of an orthonormal basis of $\C^n.$ 
For $n=4$ this looks like 
$$
\begin{pmatrix}
\iota(v_1)\iota(v_1)^* & \iota(v_2)\iota(v_2)^* &e_3 e_3^*&e_4 e_4^* \\
\iota(v_2) \iota(v_2)^* & \iota(v_1) \iota(v_1)^* & e_4 e_4^*&e_3 e_3^* \\
e_3 e_3^*&e_4 e_4^*& \iota(w_1)\iota(w_1)^* & \iota(w_2)\iota(w_2)^* \\
 e_4 e_4^*&e_3 e_3^*& \iota(w_2)\iota(w_2)^* & \iota(w_1) \iota(w_1)^*
\end{pmatrix}.
$$
For $V = \begin{pmatrix} I_m \\ 0 \end{pmatrix} \in \mat_{n,m}(\C)$ we have $V^*V = I_{m}$ and 
$$
V^*CV= A \oplus B \in \mconv(\mathsf{R}_n^{(n)}),
$$ 
which proves the claim.
\end{proof}

It follows that, after taking matrix convex hulls, the diagram from \cref{fig:inc} simplifies to \cref{fig_mconv}.

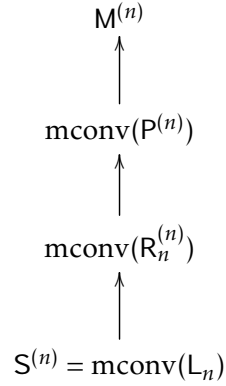
\begin{figure}[t]
\begin{center}
\hfil
\xymatrix{ 
\mathsf{M}^{(n)}\\
 \mconv(\mathsf{P}^{(n)})\ar@{->}[u]\\
\mconv(\mathsf{R}_n^{(n)})\ar@{->}[u]\\
\mathsf{S}^{(n)}  = \mconv(\mathsf{L}_n)\ar@{->}[u]
}
\hfil
\end{center}
\caption{Inclusions between the matrix convex hulls (compare with \cref{fig:inc}). 
$\mathsf{S}^{(n)}  = \mconv(\mathsf{L}_n)$ is shown in \cref{thm_mconv},
and its inclusion in $\mconv(\mathsf{R}_n^{(n)})$ is strict for even $n\geqslant 4$ by \cref{thm_ntqls}. 
The inclusion of $\mconv(\mathsf{P}^{(n)})$ in $\mathsf{M}^{(n)}$ is strict for all $n\geqslant 3$, as proven in \cite{J}. We do not know if  the inclusion $\mconv(\mathsf{R}_n^{(n)}) \subseteq \mconv(\mathsf{P}^{(n)})$ is strict. }
\label{fig_mconv}
\end{figure}

\begin{remark}
We thank David E.\ Roberson for bringing to our attention the following fact.
The statement $\mathsf{L}_n\subsetneq \mathsf{R}_n^{(n)}$ holds for \emph{all} $n\geq 4$, 
not only the even ones as proven in \cref{thm_ntqls}. 
This follows from \cite[Theorem 4.6]{Ro20b}, which shows that for each $n$ there are $n!$ quantum Latin squares of which at most $n(n-1)$ have commuting entries.
This implies that for every $n\geq 4$ there are quantum Latin squares not in $\mathsf{C}^{(n)}$ and thus neither in $\mathsf{L}_n$. 
\end{remark}

\section{Conclusions and Outlook}\label{sec:conclusions}

In this work, 
we have proven three results concerning the structure of the special types of quantum magic squares. 
First, every semiclassical magic square can be purified to a quantum Latin square;  
equivalently, 
the matrix convex hull of several sets of Latin squares is
the set of  semiclassical magic squares (\Cref{thm_mconv}). 
Second, the matrix convex hull of all quantum Latin squares is  larger than the set of semiclassical magic squares;  this follows from the existence of non-semiclassical quantum Latin squares of even sizes larger than $2$ (\Cref{thm_ntqls}). 
And third, 
the easy quantum Latin squares (i.e.\ those that arise from a classical Latin square) 
are precisely the semiclassical ones (\Cref{prop:sc and ro = Ln}).

It would be interesting to define and characterise quantum magic cubes, as well as the three `dimensional' version of their friends. 
Quantum permutation cubes would be represented by the hypergraph of \cref{fig:magiccube}. 
Not every magic cube can be expressed as a convex combination of permutation cubes, 
i.e.\ the analogue of Birkhoff--von Neumann theorem fails. 
Perhaps a quantum version of a magic cube could recover a behaviour of this type.

\begin{figure}
\includegraphics[width=0.2\columnwidth]{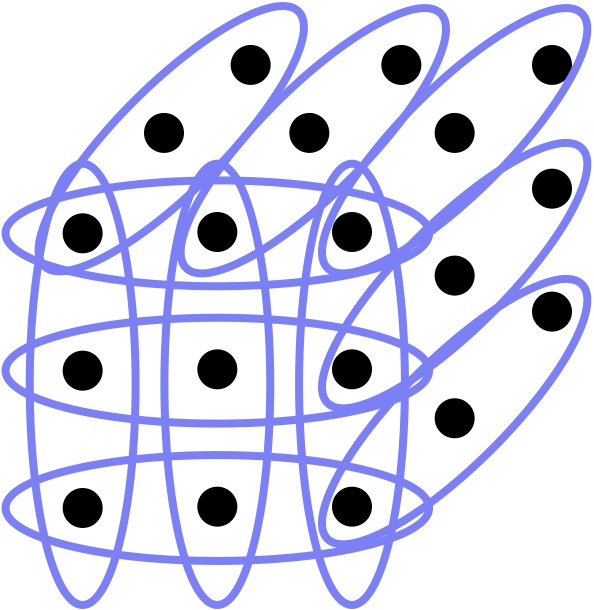}
\caption{The quantum hypergraph corresponding to a quantum magic cube. Nodes and hypergraphs behind the top layer are now shown. }
\label{fig:magiccube}
\end{figure}

Quantum magic squares can be seen as `magical' combinations of quantum measurements, and this work and others investigate when this combination can be purified, as well as other properties. 
Positive semidefinite matrices are intimately related to completely positive maps by the Choi--Jamio\l{}kowski isomorphism (given a bipartite structure of the former). 
Yet, we do not know if any of the investigations on quantum magic squares has any consequences for `magical' combinations of completely positive maps. 

Finally, it would be very interesting to establish a correspondence between (quantum) magic squares and the study of contextuality \cite{Sc20}, 
and to understand matrix convex hulls of classical magic squares as epistemic restrictions of classical theories.


\end{document}